\numberwithin{equation}{section}
\newtheorem{theorem}{Theorem}
\newtheorem{lemma}{Lemma}
\begin{document}
\title{Covert Communication in Continuous-Time Systems}

%%% Several authors with up to three affiliations:
\author{%
  \IEEEauthorblockN{Ke Li}
  \IEEEauthorblockA{Electrical and Computer\\ Engineering,
                    UMass Amherst\\
                    Email: kli0@umass.edu}
  \and
  \IEEEauthorblockN{Don Towsley}
  \IEEEauthorblockA{College of Information and \\ Computer Science (CICS),
					UMass Amherst\\
					Email: towsley@cs.umass.edu}
  \and  
  \IEEEauthorblockN{Dennis Goeckel}
  \IEEEauthorblockA{Electrical and Computer\\ Engineergin,
		UMass Amherst\\
		Email: goeckel@ecs.umass.edu}
	
	\thanks{ This work was supported by the National Science Foundation under grant CNS-1564067.}
}

\maketitle

%%%%%%
%% Abstract: 
%% If your paper is eligible for the student paper award, please add
%% the comment "THIS PAPER IS ELIGIBLE FOR THE STUDENT PAPER
%% AWARD." as a first line in the abstract. 
%% For the final version of the accepted paper, please do not forget
%% to remove this comment!
%%
\begin{abstract}
	Recent works have considered the ability of transmitter Alice to communicate reliably to receiver Bob without being detected by warden Willie. These works generally assume a standard discrete-time model. But the assumption of a discrete-time model in standard communication scenarios is often predicated on its equivalence to a continuous-time model, which has not been established for the covert communications problem. Here, we consider the continuous-time channel directly and study if efficient covert communication can still be achieved. We assume that an uninformed jammer is present to assist Alice, and we consider additive white Gaussian noise (AWGN) channels between all parties. For a channel with  approximate bandwidth $W$, we establish constructions such that $\mathcal{O}(WT)$ information bits can be transmitted covertly and reliably from Alice to Bob in $T$ seconds for two separate scenarios: 1) when the path-loss between Alice and Willie is known; and 2) when the path-loss between Alice and Willie is unknown.
\end{abstract}

%% The paper must be self-contained. However, if you are referring to
%% a full version for checking certain proofs, please provide the
%% publically accessible location below.  If the paper is completely
%% self-contained, you can remove the following line from your
%% submission.

\section{Introduction}
Security is a major concern in modern wireless communications, where it is often obtained by encryption. However, this is not sufficient in applications where the very existence of the transmission arouses suspicion. For example, in military communications, the detection of a transmission may reveal activity in the region. Thus, it is important to study covert communication: hides the existence of the transmission, i.e., a transmitter (Alice) can reliably send messages to a legitimate receiver (Bob) without being detected by an attentive warden (Willie). 
Recent work studied the limits of reliable covert communications. Bash \textit{et al.} first studied such limits over discrete-time AWGN channels in \cite{LRC}, where a square-root law (SRL) is provided: Alice can transmit at most $\mathcal{O}(\sqrt{n})$ covert bits to Bob in $n$ channel uses of a discrete-time AWGN channel. This SRL was then established in successive work by Che \textit{et al.} in \cite{RDC1} over binary symetric channels (BSCs) and by Wang \text{et al.} in \cite{FLC} over arbitrary discrete memoryless channels (DMCs). The length of the secret key needed to achieve the SRL in covert communications over DMCs was established in \cite{CCN}. The work in \cite{FLC} and \cite{CCN} also established the scaling constants for the covert throughput. These works provide a thorough study in common discrete-time channel models when Willie has an accurate statistical characterization of the channel from Alice to him.

\begin{figure}[h!]
	\includegraphics[width=2.5in]{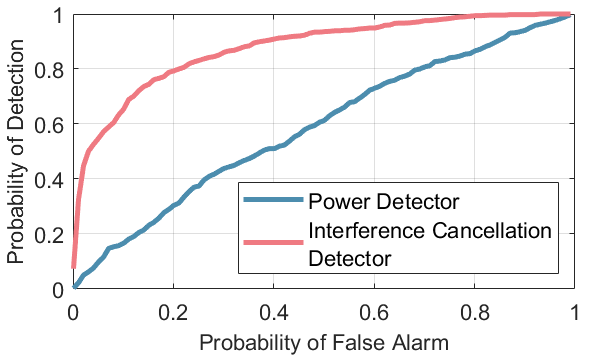}
	\centering
	\caption{Receiver operating characteristic curves of the interference cancellation detector and the standard power detector (implemented in a continuous-time covert communication system). The simulation is described in detail in Appendix A.}
	\label{ICD}
\end{figure}
In covert communications, Willie is attempting to determine whether he is just observing the background environment or a signal from Alice in that environment. Hence, uncertainty about that environment helps Alice to hide her transmission. Lee \textit{et al.}, \cite{ADC} and Che  \textit{et al.}, \cite{RDC2} show that $\mathcal{O}(n)$ covert bits in $n$ channel uses can be reliably transmitted from Alice to Bob if Willie is unsure of the variance of the noise at his receiver.
However, Goeckel \textit{et al.}, \cite{CCW} shows that Willie's lack of knowledge of his noise statistics can be compensated for by estimation through a collection of channel observations when Alice does not transmit. Thus, the limit of covert communications in this case goes back to the SRL. Sobers \textit{et al.} in \cite{CCP} then introduced another model to achieve positive covert rate: introducing an uninformed jammer to the system that randomly generates interference, hence providing the required uncertainty at Willie. In \cite{CCP}, it is proved that the optimal detector for Willie in the discrete-time model is a power detector. And, with Willie employing this optimal detector, Alice can covertly transmit $\mathcal{O}(n)$ bits in $n$ channel uses over both AWGN and block fading channels.

The works mentioned above are all based on a discrete-time model and thus implicitly assume that analogous results can be obtained on the corresponding continuous-time model. Bash \textit{et al.} first mentioned the potential fragility of such an assumption from \cite{LRC}: ideal $\text{sinc}(\cdot)$ pulse shapes are not feasible for implementation, perfect symbol synchronization might not always hold true, and sampling at higher rates sometimes has utility for signal detection at Willie even if the Nyquist ISI criterion is satisfied. In addition, continuous-time signals for transmission contain periodic features that can be extracted by the receiver to help it differentiate the signal from Gaussian noise. Thus, a power detector that is optimal at Willie \cite{CCP} in the discrete-time model may not be optimal in the continuous-time case. 

In \cite{CCC}, Sobers \textit{et al.} introduced a linear detector for the warden Willie that outperforms the standard power detector implemented in the continuous-time system in some limited scenarios. For general scenarios, we have developed an interference cancellation detector (inspired by co-channel interference cancellation techniques in cellular networks\cite{SC}), and show in Fig.~\ref{ICD} that this detector outperforms the standard power detector; hence, a major tenet of \cite{CCP} that facilitated the establishment of positive rate covert communications in the discrete-time case does not hold in the continuous-time case. Rather, Willie's detection capability benefits from the continuous-time setting, and hence raises questions on the covert limits in true continuous-time channels.
In this paper we will establish constructions for Alice such that positive covert rate is achievable. The reader will note how the constructions provided here are quite different from those in \cite{CCP}.

\section{System Model and Metrics}
\subsection{System Model}
Consider a scenario shown in Fig.~\ref{model} where transmitter Alice (``a'') wants to transmit a message to intended recipient Bob (``b'') reliably without being detected by a
warden Willie (``w''). A jammer (``j'') assists the communication by actively sending jamming signals, but without any coordination with Alice. 
\begin{figure}[h!]
	\includegraphics[width=2.5in]{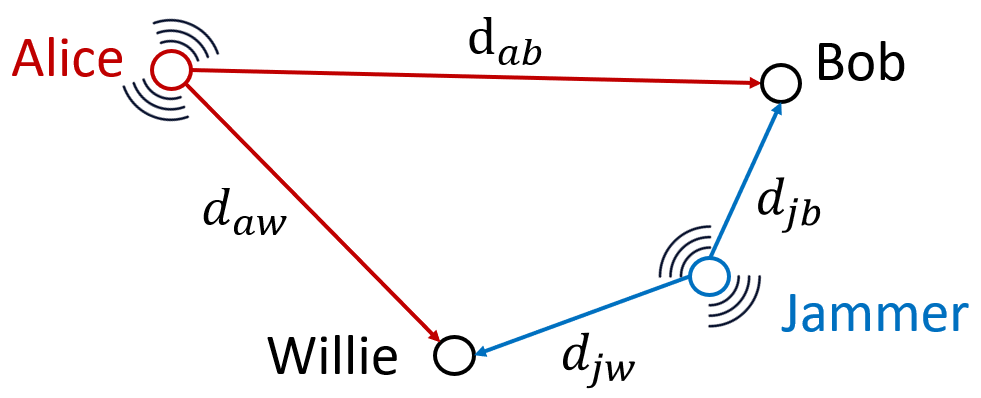}
	\centering
	\caption{System model: With help from a jammer, Alice attempts to transmit reliably and covertly to Bob in the presence of a warden Willie.}
	\label{model}
\end{figure}

For $t\in[0,T]$, we consider continuous-time channels where Alice and the jammer send symbols using pulse-shaped waveforms. Since Alice can send pulses at any time in the continuous time interval $[0,T]$, she and Bob share an infinite length key \cite{CCN} encoding those locations unknown to Willie.
If Alice decides to transmit, she maps her message to waveform $x_a(t)$ restricted to approximate bandwidth \footnote{See Appendix B for a discussion of the bandwidth of the constructions} $W$ under an average power constraint of $\sigma_a^2$.
The jammer transmits regardless if Alice transmitted or not. It sends waveform $x_j(t)$ that is also restricted to approximate bandwidth $W$ under an average power constraint of $\sigma_j^2$.
The channels between each transmitter and receiver pair are assumed to be AWGN, and thus the signal observed by Willie is given by:
\begin{align}
z(t)=
&\left\{ \begin{array}{lr} \frac{x_a(t-\tau_a)}{d_{aw}^{r/2}} + \frac{x_j(t-\tau_j)}{d_{jw}^{r/2}} +N^{(w)}(t), \,\text{Alice transmits} \\   \frac{x_j(t-\tau_j)}{d_{jw}^{r/2}} +N^{(w)}(t), \,\text{Alice does not transmit} \end{array} \right.
\label{z}
\end{align}
where $d_{xy}$ is the distance between a transmitter $x$ and a receiver $y$, $r$ is the path-loss exponent, $\tau_a$ and $\tau_j$ are time delays of Alice's and the jammer's signal, respectively, and $N^{(w)}(t)$ is the noise observed at Willie's receiver, which is a zero-mean stationary Gaussian random process with power spectral density $N_0^{(w)}/2$. 
Bob observes the channel output $y(t)$ at time $t$, which is analogous to $z(t)$ but with
the substitution of the noise $N^{(b)}(t)$ for $N^{(w)}(t)$, and $N^{(b)}(t)$ is a zero-mean stationary Gaussian random process with power spectral density of $N_0^{(b)}/2$.

We consider two scenarios: 1) the path-loss $d_{aw}^r$ between Alice and Willie is known; and 2) the path-loss $d_{aw}^r$ is unknown. In both scenarios, if not specified,
we assume the path-loss between any transmitter and receiver pair is one, without loss of generality.

\subsection{Metrics}
\subsubsection{Willie}
Based on his observations over the time interval, Willie attempts to determine whether Alice transmitted or not. We define the null hypothesis ($H_0$) as that Alice did not transmit during the time interval and the alternative hypothesis ($H_1$) as that Alice transmitted a message. We denote $P(H_0)$ and $P(H_1)$ as the probability that hypothesis $H_0$ or $H_1$ is true, respectively. Willie tries to minimize his probability of error $P_{e,w}=P(H_0)P_{FA}+P(H_1)P_{MD}$, where $P_{FA}$ and $P_{MD}$ are the probabilities of false alarm and missed detection at Willie, respectively. We assume that $P(H_0)$ and $P(H_1)$ are known to Willie. Since $P_{e,w} \geq \min(P(H_0),P(H_1))(P_{FA}+P_{MD})$ \cite{LRC}, we say that Alice achieves covert communication if, for a given $\epsilon >0$, $P_{FA}+P_{MD}\geq 1-\epsilon$ \cite{LRC}.

We assume that Willie has full knowledge of the statistical model: the time interval $[0,T]$, the parameters for Alice's random codebook generation, the parameters for the jammer's random interference generation, and the noise variance of his channel. Willie does not know the secret key shared between Alice and Bob, or the instantiation of the random jamming.

\subsubsection{Bob}
Bob should be able to reliably decode Alice's message. This is characterized by the probability $1-P_{e,b}$ where $P_{e,b}$ is the probability of error at Bob. We say that Alice achieves reliable communication if, for a given $\delta >0$, $P_{e,b}< \delta$ \cite{LRC}.

\section{Achievable Covert Communications: Known Path-Loss}

In this section, we consider the case that the path-loss between each transmitter and receiver pair is known, which we assume is one without loss of generality. We provide a construction for Alice and the jammer that consists of them sending randomly located pulses, and then demonstrate that the optimal detector for Willie, under this construction, is a threshold test on the number of pulses he observes. The ability for Alice to covertly send $\mathcal{O}(WT)$ bits is then established. This shows that covert communications with a positive rate can be achieved in continuous-time systems with equal path-loss.

\subsection{Construction}
We employ random coding arguments and generate codewords by independently drawing symbols from a zero-mean complex Gaussian distribution with variance $\sigma_a^2$. 
If Alice decides to transmit, she selects the codeword corresponding to her message, sets $f_i$ to the $i^{th}$ symbol of that codeword, and transmits the symbol sequence $\bold{f}=\{ f_1,f_2,\ldots \}$. 
The jammer transmits zero-mean complex Gaussian symbol sequence $\bold{v}=\{v_1.v_2,\ldots \}$, with variance $\sigma_j^2$. Here we choose $\sigma_j^2=\sigma_a^2$, i.e., Alice and the jammer use this same average transmit power, so that Alice can possibly hide her signal in the jammer's interference. 

Let $n=\lfloor WT\rfloor$ be an integer. Over the time interval $[0,T]$, Alice sends $M_a$ symbol pulses, where $M_a$ follows a binomial distribution with mean $\alpha n$, i.e., $M_a\sim B(n,\alpha)$, with constant $0\leq\alpha< 1$.
Alice's codeword length is chosen to be an integer close to $\alpha n-\epsilon_c n$ ($\epsilon_c$ is a small positive constant), such that as $n\to\infty$, there are enough pulses over $[0,T]$ for all of Alice's codeword symbols.
The jammer sends $M_j$ pulses, where $M_j$ follows a binomial distribution with mean $\beta n$, i.e., $M_j\sim B(n,\beta)$, with $\beta$ is uniformly distributed over $[\mu,\mu+\Delta]$, $0\leq\mu<\mu+\Delta\leq 1$ and $\Delta\geq\alpha$. Let $\tau_k, k=1,2,\ldots,M_a$ and $\tau_k',k=1,2,\ldots,M_j$ be be independent and identically distributed (i.i.d.) sequences of pulse delays of Alice and the jammer, respectively. The delays are drawn uniformly over $[0,T]$.
Alice's waveform within interval $[0,T]$ is then given by:
\begin{align}
x_a(t)=\sum_{k=1}^{M_a} f_k p(t-\tau_k)
\label{w1}
\end{align}
where $p(t)$ is a unit-energy pulse shaping filter with bandwidth $W$. Obviously, a waveform restricted to $[0,T]$ cannot have a finite bandwidth, we provide a brief discussion on this issue in Appendix B.
The jammer's waveform within interval $[0,T]$ is given by:
\begin{align}
x_j(t)=\sum_{k=1}^{M_j} v_k p(t-\tau'_k) .
\label{w2}
\end{align}
For an AWGN channel, Willie observes the signal $z(t)$ given in (\ref{z}). 

\subsection{Analysis}
To obtain an achievability result for covert communications, Willie should be assumed to employ an optimal detector. We will find an upper bound to the performance of that optimal detector by assuming a genie provides Willie additional information; in particular, we assume that Willie not only knows how the system is constructed (including $\alpha$, the distribution of $\beta$ and the transmission power $\sigma_a^2$ and $\sigma_j^2$ of the symbols), but also knows the number of pulses and the exact locations (timing) of each pulse on the channel in $[0,T]$. 
The only thing he does not know is from whom each pulse is sent.  In the next section, we will prove that the optimal test for Willie is a threshold test on the number of pulses he observed.

\subsection{Optimal Hypothesis Test}
Given the construction above, Willie's test is between the two hypotheses $H_0$ and $H_1$ where he has complete statistical knowledge of his observations when either hypothesis is true. We denote: Alice's decision on transmission as $D$ (which corresponds to hypothesis $H_0$ when Alice decides to transmit, or $H_1$ when she decides not to); the total number of pulses sent during time $T$ as $M$; the locations (over $[0,T])$ of the pulses as a vector $\bold{L}$; and the height (square root of the power) of the pulses as a vector $\bold{S}$, i.e., $\bold{S}$ is the vector of the original symbols sent. We want to first show that $M$ is a sufficient statistic for Willie's detection. The random variables $D$, $M$, $\bold{L}$ and $\bold{S}$ form a Markov chain shown in Fig.~\ref{markov1}, which illustrates the transition from Alice's state $D$ to Willie's  received signal $z(t)$.
The transitions of the Markov chain are:
\begin{itemize}
	\item{$D \longrightarrow M$: The conditional distribution of $M$, given $D$, is binomial with mean $\beta n$ when Alice does not transmit, and binomial with mean $\beta n +\alpha n$ when Alice transmits.}
	\item{$M \longrightarrow \bold{L},\bold{S}$: Given $M$, the distribution of $L_m$ for $m=1,2,\ldots, M$ is uniform over $[0,T]$. The distribution of $S_m$ for $m=1,2,\ldots,M$ is zero-mean Gaussian with variance $\sigma_a^2=\sigma_j^2$.}
\end{itemize}
\begin{figure}[h!]
	\includegraphics[width=2.2in]{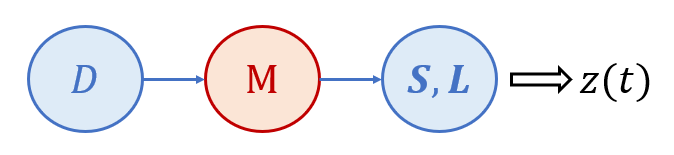}
	\centering
	\caption{Markov chain illustrating the transition from Alice's decision $D$ on transmission, to Willie's observed signal $z(t)$.}
	\label{markov1}
\end{figure}

Given the pulse locations and the height of the pulses, the signal $z(t)$ observed at Willie's receiver can be constructed from the pulse-shaping function $p(t)$ and the AWGN of Willie's channel. From the Markov chain shown in Fig.~\ref{markov1}, we see that $z(t)$ conditioned on $M$ is independent of $D$. Thus, $M$ is a sufficient statistic for Willie to make an optimal decision on Alice's presence. Therefore, by applying the Neyman-Pearson criterion, the optimal test for Willie to minimize his probability of error is the likelihood ratio test (LRT) \cite{FSS}:
\begin{align}
\Lambda(M=m)=\frac{P_{M|H_1}(m)}{P_{M|H_0}(m)} \overset{H_1}{\underset{H_0}{\gtrless}} \gamma
\label{lrt}
\end{align}
where $\gamma=P(H_0)/P(H_1)$, and $P_{M|H_1(m)}$ and $P_{M|H_0(m)}$ are the probability mass functions (pmfs) of the number of pulses given that Alice transmitted or did not transmit, respectively.
Given the LRT above, we want to show that this is equivalent to a threshold test on the number of pulses Willie observes at his receiver, which is true if the LRT exhibits monotonicity in $M$.
We employ the concept of stochastic ordering \cite{SO} to derive the desired monotonicity result. We say that $X$ is smaller than $Y$ in the likelihood ratio order (written as $X \leq_{lr} Y$) when $\frac{f_Y(x)}{f_X(x)}$ is non-decreasing over the union of their supports, where $f_Y(x)$ and $f_X(x)$ are pmfs or probability density functions (pdfs) of $Y$ and $X$, respectively \cite{CCP}. 
\begin{lemma} (Th. 1.C.11 in \cite{SO} adapted to pmfs):
	Consider a family of pmfs \{$g_b(\cdot), b\in\mathcal{X}$\} where $\mathcal{X}$ is a subset of the real line. Let $M(b)$ denote a random variable with pmf $g_b({\cdot})$.  For $\rho=0,1$,  let $B_\rho$ denote a random variable with support $\mathcal{X}$ and pdf $h_{B_\rho}(\cdot)$, and let
	$W_\rho=_d M(B_\rho)$ (where $=_d$ is defined as equality in distribution or law) denote a random variable with pmf given by:
	\begin{align}
	p_{W_\rho}(w)=\int_{b\in\mathcal{X}}  g_b(w) \, d \,h_{B_\rho}(b), \,\,\,\, w=0,1,\ldots \,. \nonumber
	\end{align}
	If $M(b) \leq_{lr} M(b')$ whenever $b\leq b'$, and if $B_0 \leq_{lr} B_1$, then:
	\begin{align}
	W_0 \leq_{lr} W_1. \nonumber
	\end{align}
\end{lemma}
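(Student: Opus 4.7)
The plan is to verify the definition of $W_0 \le_{lr} W_1$ directly: for every $w \le w'$ in the union of the supports, I want to show
\[ p_{W_0}(w)\,p_{W_1}(w') \;-\; p_{W_0}(w')\,p_{W_1}(w) \;\ge\; 0, \]
which is equivalent to $p_{W_1}/p_{W_0}$ being nondecreasing. Substituting the mixture representation $p_{W_\rho}(w)=\int_{\mathcal X} g_b(w)\,dh_{B_\rho}(b)$ into each factor and renaming the dummy variable in the subtracted term would turn this quantity into the single double integral
\[ \iint_{\mathcal X\times \mathcal X} g_b(w)\,g_{b'}(w')\,\bigl[dh_{B_0}(b)\,dh_{B_1}(b') - dh_{B_0}(b')\,dh_{B_1}(b)\bigr]. \]

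The key analytic step is a standard symmetrization. I would split the domain into $\{b<b'\}$ and $\{b>b'\}$ (the diagonal $\{b=b'\}$ contributes zero) and then swap $b\leftrightarrow b'$ on the latter half. The two pieces combine into a single integral over $\{b<b'\}$ whose integrand is the product of two antisymmetric brackets,
\[ \bigl[g_b(w)g_{b'}(w')-g_{b'}(w)g_b(w')\bigr]\,\bigl[dh_{B_0}(b)\,dh_{B_1}(b')-dh_{B_0}(b')\,dh_{B_1}(b)\bigr]. \]

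Finally, both brackets are nonnegative pointwise on the region of integration. The first is nonnegative because $b<b'$ together with the hypothesis $M(b)\le_{lr} M(b')$ gives $g_b(w)g_{b'}(w')\ge g_b(w')g_{b'}(w)$; the second is nonnegative because $b<b'$ together with $B_0\le_{lr} B_1$ gives $h_{B_0}(b)h_{B_1}(b')\ge h_{B_0}(b')h_{B_1}(b)$. Multiplying two nonnegative quantities and integrating yields the desired inequality. The main obstacle is purely notational: the hypothesis on $\{g_b\}$ is stated for pmfs while $\{h_{B_\rho}\}$ may be continuous, so a little care is needed to make sure the $b\leftrightarrow b'$ swap is justified in whichever measure-theoretic setting we use (e.g.\ writing everything with respect to a common dominating measure on $\mathcal X$). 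No new analytic idea beyond this symmetrization plus two pointwise TP$_2$ inequalities is required.
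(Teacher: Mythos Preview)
Your argument is correct: the symmetrization over $\{b<b'\}$ together with the two pointwise TP$_2$ inequalities is exactly the standard proof of this closure property of the likelihood-ratio order (often called the basic composition formula). There is nothing to compare, however, because the paper does not give its own proof of this lemma; it merely states the result and attributes it to Theorem~1.C.11 of Shaked and Shanthikumar~\cite{SO}. Your proof is essentially what one finds in that reference, so you have supplied the missing justification rather than diverged from anything in the paper.
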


We let:
\begin{align} 
b\overset{\Delta}{=}\left\{ \begin{array}{lr} \beta n, \,\text{when Alice does not transmit}\\   
\beta n +\alpha n,\, \text{when Alice transmits} \end{array} \right.\nonumber
\end{align}
and introduce two random variables $B_0$ and $B_1$ with pdfs given by:
\begin{align}
f_{B_\rho}(b)=	&\left\{ \begin{array}{lr} \frac{1}{\Delta}, \,\mu n<b\leq\mu n+\Delta n, \,\rho=0 \\   \frac{1}{\Delta}, \,\mu n+\alpha n<b\leq\mu n+\alpha n+\Delta n,\,\rho=1 \\
0,\, \text{else} \end{array} \right. \nonumber
\end{align}
The LRT in (\ref{lrt}) can be written as:
\begin{align}
\Lambda(M=m) 
=\frac{E_{B_1}\left[P_{M(b)}(m)\right]}{E_{B_0}\left[P_{M(b)}(m)\right]} \nonumber
\end{align}
where $M(b)$ follows a binomial distribution, i.e., $M(b)\sim B(n,\frac{b}{n})$.

\begin{theorem}
	Given the construction in the previous section, Willie's optimal detector compares the number of pulses he observes to a threshold.
\end{theorem}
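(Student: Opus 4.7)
My plan is to reduce the theorem to showing that the likelihood ratio $\Lambda(m)$ in (\ref{lrt}) is a non-decreasing function of $m$; once this monotonicity is in hand, the LRT automatically rewrites as a threshold test on $M$, since $\{m : \Lambda(m) > \gamma\}$ is then an upper interval of $\mathbb{N}$. Combined with the earlier argument that $M$ is a sufficient statistic, this gives the claim.

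To obtain the monotonicity, I will apply the stated Lemma 1 with the family $g_b(\cdot)$ equal to the binomial pmf of $M(b) \sim B(n, b/n)$ and with $B_0, B_1$ as defined just before the theorem. The lemma has two hypotheses to verify. First, I need the pointwise likelihood-ratio order $M(b) \leq_{lr} M(b')$ whenever $b \leq b'$. For the binomial with success probability $p = b/n$, the ratio of consecutive pmfs behaves like $(p'/p)((1-p)/(1-p'))$, which exceeds $1$ whenever $p' \geq p$; this is the standard monotone-likelihood-ratio property of the binomial family in its success probability, so I would just cite it. Second, I need $B_0 \leq_{lr} B_1$. Since both $B_0$ and $B_1$ are uniform of length $\Delta n$ with $B_1$ shifted right by $\alpha n$, and the condition $\Delta \geq \alpha$ ensures a nonempty overlap, the ratio $f_{B_1}(b)/f_{B_0}(b)$ takes the values $0$ on $(\mu n, \mu n + \alpha n]$, $1$ on $(\mu n + \alpha n, \mu n + \Delta n]$, and $+\infty$ on $(\mu n + \Delta n, \mu n + \alpha n + \Delta n]$, which is non-decreasing over the union of supports.

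Applying Lemma 1 then yields $W_0 \leq_{lr} W_1$, where $W_\rho$ has pmf $P_{M \mid H_\rho}(m)$. By the definition of the likelihood-ratio order, this is exactly the statement that
\begin{align}
\Lambda(m) = \frac{P_{M \mid H_1}(m)}{P_{M \mid H_0}(m)} \nonumber
\end{align}
is non-decreasing in $m$, and hence the LRT decision region $\{m : \Lambda(m) \gtrless \gamma\}$ reduces to $\{m \gtrless m^*\}$ for some integer threshold $m^*$ (with a possible tie-breaking randomization on the boundary).

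The only mildly delicate step is the verification that the mixing pdfs satisfy $B_0 \leq_{lr} B_1$ in a way compatible with the lemma's hypotheses, because the two uniform supports are distinct and one must be careful about the zero/infinity values of the ratio on the non-overlapping pieces. The condition $\Delta \geq \alpha$ imposed in the construction is precisely what guarantees that the supports overlap and the ratio is monotone rather than discontinuous in a non-monotone manner; I would spell this out explicitly. The binomial MLR step and the final invocation of Lemma 1 are then routine.
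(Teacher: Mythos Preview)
Your proposal is correct and follows essentially the same route as the paper's proof: verify $B_0 \leq_{lr} B_1$ from the shifted-uniform densities, verify the binomial family has monotone likelihood ratio in its success probability, then invoke Lemma~1 to conclude $\Lambda(m)$ is non-decreasing and hence the LRT collapses to a threshold on $M$. The only cosmetic differences are that the paper computes the binomial ratio $R(m)=(b'/b)^m((n-b')/(n-b))^{n-m}$ explicitly rather than citing the MLR property, and that your remark about $\Delta\geq\alpha$ being needed for monotonicity is a slight over-statement (even with disjoint supports the ratio $0\to\infty$ is still non-decreasing), but neither affects correctness.
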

\begin{proof}
	First, applying the definition of $\leq_{lr}$ to the densities of $B_0$ and $B_1$ yields that $B_0 \leq_{lr} B_1$. Then, let $R(m)\overset{\Delta}{=} \frac{P_{M(b')}(m)}{P_{M(b)}(m)}$, we write:
	\begin{align}
	R(m)&=\frac{\begin{pmatrix} 
		n\\m \end{pmatrix}\left(\frac{b'}{n}\right)^m \left(1-\frac{b'}{n}\right)^{n-m}}{\begin{pmatrix} 
		n\\m \end{pmatrix}\left(\frac{b}{n}\right)^m \left(1-\frac{b}{n}\right)^{n-m}}\nonumber \\
	&= \left(\frac{b'}{b}\right)^m \left(\frac{n-b'}{n-b}\right)^{n-m}  \nonumber
	\end{align}
	which monotonically increases as $m$ increases for $b\leq b'$. Thus, $M(b) \leq_{lr} M(b')$ whenever $b\leq b'$.
	The application of Lemma 1 then yields that $\Lambda(\cdot)$ is non-decreasing in $m$. Therefore, the LRT is equivalent to the test:
	\begin{align}
	M  \overset{H_1}{\underset{H_0}{\gtrless}} \gamma'
	\label{test}
	\end{align}
	corresponding to a threshold test on the number of pulses observed by Willie.	
\end{proof}

Dividing both sides of (\ref{test}) by $n$ yields the equivalent test:
\begin{align}
\frac{M}{n}\overset{H_1}{\underset{H_0}{\gtrless}} \gamma_n \nonumber
\end{align}
where $\gamma_n=\gamma'/n$. For any finite $n$, there is an optimal threshold $\gamma_n$ such that it minimizes Willie's probability of error in detecting Alice's existence. However, we will show that for any $\gamma_n$ Willie chooses, he will not be able to detect Alice as $n\to\infty$; that is, for any $\epsilon>0$, there exists a construction such that $P_{FA}+P_{MD}>1-\epsilon$ for $n$ large enough.

\subsection{Covert Limit}
Recall that $\beta$ is a uniform random variable on $[\mu, \mu+\Delta]$, with $0\leq\mu<\mu+\Delta\leq 1$ and constant $\Delta\geq\alpha$. Let $P_{FA}(u)$ and $P_{MD}(u)$ be Willie's probability of false alarm and missed detection conditioned on $\beta=u$, respectively. Then:
\begin{align}
P_{FA}(u)&=P\left(\frac{M}{n} \geq \gamma_n \mid H_0\right) =P\left( \frac{1}{n}\sum_{i=1}^{n} R_i \geq \gamma_n \mid H_0 \right) \nonumber
\end{align}
where, under $H_0$, $R_i$ is a Bernoulli random variable that takes value one with probability $u$. By the weak law of large numbers, $\frac{1}{n}\sum_{i=1}^{n} R_i$ converges in probability to $u$. Thus, for any $\eta >0$, there exists $N_0$ such that, for $n\geq N_0$, we have:
\begin{align}
P\left( \frac{1}{n}\sum_{i=1}^{n} R_i \in \left( u-\eta,u+\eta \right) \mid H_0 \right) >1-\frac{\epsilon}{2} \,.\nonumber
\end{align}
Therefore, for $n>N_0$, $P_{FA}(u)> 1-\frac{\epsilon}{2}$ for any $\gamma_n < u-\eta$.
Analogously, we write:
\begin{align}
P_{MD}&=P\left(\frac{M}{n} \leq \gamma_n \mid H_1\right) =P\left( \frac{1}{n}\sum_{i=1}^{n} R_i \leq \gamma_n \mid H_1\right) \nonumber
\end{align}
Likewise, by the weak law of large numbers, for any $\eta>0$, there exists $N_1$ such that, for $n\geq N_1$, we have:
\begin{align}
P\left( \frac{1}{n}\sum_{i=1}^{n} R_i \in \left( u+\alpha-\eta,u+\alpha+\eta \right) \mid H_1 \right) >1-\frac{\epsilon}{2} \,. \nonumber
\end{align}
Therefore, for $n>N_1$, $P_{MD}(u)>1-\frac{\epsilon}{2}$ for any $\gamma_n>\alpha+u+\eta$.

Define the set $\mathcal{A}=\{u:u-\eta<\gamma_n<\alpha+u+\eta\}$. We have established that, for any $u\in \mathcal{A}^c$ and  $n>\max(N_0,N_1)$, $P_{FA}(u)+P_{MD}(u) > 1-\frac{\epsilon}{2}$. The probability of $\mathcal{A}$ has the following upper bound:
\begin{align}
P(A)&=P(\gamma_n-\alpha-\eta < \beta <\gamma_n+\eta) \leq \frac{2\eta+\alpha}{\Delta} \,. \nonumber
\end{align}
By choosing $\eta=\frac{\epsilon\Delta}{4}$ and $\alpha=\frac{\epsilon\Delta}{2}$, we have $P(\mathcal{A})\leq \frac{\epsilon}{2}$, i.e., $P(\mathcal{A}^c)>1-\frac{\epsilon}{2}$. Hence,
\begin{align}
P_{FA}+P_{MD}&=E_\beta[P_{FA}(\beta)+P_{MD}(\beta)] \nonumber \\
& \geq E_\beta[P_{FA}(\beta)+P_{MD}(\beta) \mid \mathcal{A}^c]P(\mathcal{A}^c) \nonumber \\
& > 1-\frac{\epsilon}{2} \,.\nonumber
\end{align}
Thus, Alice can send an average of $\alpha n=\frac{\epsilon\Delta}{2}\lfloor WT \rfloor$ pulses with a constant power and remain covert from Willie. Note that since the maximum interference from the jammer at Bob can be upper bounded by a constant, reliability is also achieved under the same construction. Therefore, $\mathcal{O}(WT)$ bits can be transmitted covertly and reliably from Alice to Bob.

\section{Achievable Covert Communications: Unknown Path-Loss}
In this section, we consider the case that the jammer does not know the exact path-loss between Alice and Willie, but only knows an upper and lower bound of the received power from Alice at Willie. Without loss of generality, we assume the path-loss between the jammer and Willie is one.  Since the jammer does not know the exact path-loss between Alice and Willie, it cannot use a power that results in the pulses of Alice and the jammer arriving at Willie with the same power as in the previous section.  With the construction of the previous section, Willie could separate Alice and the jammer by looking for a pulse power distribution that is the combination of two distributions.  Therefore, to prevent Willie from detecting Alice, another construction is needed. The idea of the construction is to let the jammer send pulses with multiple power levels that cover
a wide range of the power spectrum, so that if Alice uses an average power within that range, she can possibly hide herself in the jammer’s interference. We will establish the construction and show that under such construction, covert communications with a positive covert rate can be achieved.

\subsection{Construction} 
\subsubsection{Alice}
Similar to before, we employ random coding arguments and generate codewords by independently drawing symbols from a zero-mean complex Gaussian distribution. However, Alice's average transmission power is random: she chooses a power level uniformly over $[P_a,P_a+\Delta_{P_a}]$ where $P_a$ and $\Delta_{P_a}$ are constants  ($P_a+\Delta_{P_a}\leq\sigma_a^2$), and transmits symbols with this power over $[0,T]$. Alice transmits a total of $M_n=\lfloor \alpha n \rfloor$ pulses ($0\leq\alpha<1$ is a constant) over $[0,T]$. Therefore, Alice's waveform within $[0,T]$ is given by:
\begin{align}
x_a(t)=\sum_{i=1}^{M_n} f_i p(t-\tau_i)
\nonumber
\end{align}
where $f_i, i=1,2,\ldots,M_n$ is a sequence of i.i.d zero-mean Gaussian symbols with the same variance that is uniformly drawn from $[P_a,P_a+\Delta_{P_a}]$, and $\tau_i, i=1,2,\ldots,M_n$ is a sequence of i.i.d pulse delays that are uniformly distributed in $[0,T]$.

\subsubsection{Jammer}
The jammer also sends i.i.d. Gaussian symbols. It first determines a number $K$ of power levels according to a Poisson distribution, i.e., $K\sim Pois\left(\lambda_j\right)$, where $\lambda_j$ is a constant. It then chooses each of the $K$ power levels uniformly in $[P_j,P_j+\Delta_{P_j}]$, where $P_j$ and $\Delta_{P_j}$ are constants ($P_j+\Delta_{P_j}\leq\sigma_j^2$), to transmit its symbols. Note that the range of the jammer's power at Willie needs to cover the range of all possible values of Alice's power at Willie, and since the jammer knows an upper and lower bound of the received power from Alice and Willie, $P_j$ and $\Delta_{P_j}$ are chosen such that $\left[\frac{P_a}{d_{aw}^r},\frac{P_a+\Delta_{P_a}}{d_{aw}^r}\right] \subset \left[P_j,P_j+\Delta_{P_j}\right]$. Note that this implies that the jammer knows a lower bound on the distance between Alice and Willie.
The jammer transmits $M_n$ number of pulses for each power level it chooses. Hence, it will transmit a total of $KM_n$ pulses over $[0,T]$.  Therefore, the jammer's waveform is given by:
\begin{align}
x_j(t)=\sum_{k=1}^{K}\sum_{i=1}^{M_n} v_{i,k} p(t-\tau_{i,k}')
\nonumber
\end{align}
where $v_{i,k}, i=1,2,\ldots,M_n, k=1,2,\ldots,K$ is a sequence of i.i.d zero-mean Gaussian symbols with variance being the $k^{\text{th}}$ power level randomly chosen by the jammer, and $\tau_i, i=1,2,\ldots,M_n, k=1,2,\ldots,K$ is a sequence of i.i.d pulse delays that are uniformly distributed in $[0,T]$.

\subsection{Analysis}
For achievability, we derive an upper bound to the performance of Willie's optimal detector by assuming a genie provides Willie extra knowledge on the exact power range $\left[\frac{P_a}{d_{aw}^2},\frac{P_a+\Delta_{P_a}}{d_{aw}^r}\right]$ received from Alice, the distribution of the number of the jammer's power levels (including all of the parameters), and the values of all power levels employed by the jammer and Alice (if she decided to transmit), but not which power level is employed by whom.

\subsection{Optimal Hypothesis Test}
In this section, we show that the number of power levels in the range of $\left[\frac{P_A}{d_{aw}^2},\frac{P_A+\Delta_{P_A}}{d_{aw}^r}\right]$, which we term the detection region, is a sufficient statistic for Willie in deciding between hypothesis $H_0$ or $H_1$. Fig.~\ref{power_level} illustrates the power levels received at Willie. 
\begin{figure}[h!]
	\includegraphics[width=3.3in]{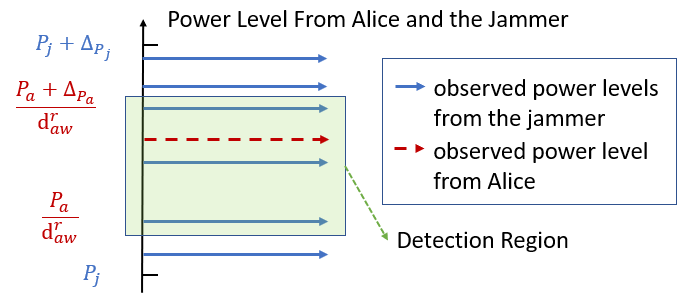}
	\centering
	\caption{Willie's received power levels from Alice and the jammer. An impulse means a power level Alice or the jammer chooses for transmission.}
	\label{power_level}
\end{figure}

Let $K_1$ be the number of power levels inside the detection region, $K_2$ be the number of power levels outside the detection region, i.e. $K=K_1+K_2$. By construction, all of the power levels sent by the jammer form a Poisson point process with $K\sim Pois\left(\lambda_j\right)$ on $[P_j,P_j+\Delta_{P_j}]$. Note that for a Poisson point process, generating $K$ power levels with mean $\lambda_j$ and placing them uniformly over $[P_j,P_j+\Delta_{P_j}]$ is equivalent to: generating $K_1$ power levels with mean $\frac{\Delta_{P_a}}{\Delta_{P_j}d_{aw}^r} \lambda_j$ and placing them uniformly inside the detection region, and generating $K_2$ power levels with mean $\left(1-\frac{\Delta_{P_a}}{\Delta_{P_j}d_{aw}^2}\right) \lambda_j$ and placing them uniformly outside the detection region. This is critical in the proof below.

Recall that $D$ denotes Alice's decision on transmission, $L$ denotes the locations (in $[0,T]$) of all of the pulses sent, and $\bold{S}$ denotes the height of the pulses. We also denote the values of all power levels (within and outside the detection region) as a vector $\bold{V}$.
The random variables $D$, $K_1$, $\bold{V}$, $\bold{L}$ and $\bold{S}$  form a Markov chain shown in Fig.~\ref{markov}, which illustrates the transition from Alice's state $D$ to Willie's received signal $z(t)$. The transitions of the Markov chain are:
\begin{itemize}
	\item{$D \longrightarrow K_1$:
	$K_1$ and $K_1-1$ are characterized by a Poisson process with mean $\frac{\Delta_{P_a}\lambda_j}{\Delta_{P_j} d_{aw}^r}$ when Alice does not transmit, and a Poisson process with mean $\frac{\Delta_{P_a}\lambda_j}{\Delta_{P_j} d_{aw}^r}$ when she does transmit.}
	\item{$K_1 \longrightarrow \bold{V}, \bold{L}$: Let $V_k, k=1,2,\ldots,K_1$ be the values of power levels within the detection region, and $V_k, k=K_1+1,K_1+2,\ldots,K$ the power values outside the detection region. Given $K_1$, the conditional distribution of $V_k, k=1,2,\ldots,K_1$, is uniform within the detection region. Note that $K_2$ is independent of $D$ since the pulses sent with power levels outside the detection region can only come from the jammer, no matter if Alice transmits or not.  Given $K_2$ (Poisson with mean $\left(1-\frac{\Delta_{P_A}}{\Delta_{P_J}d_{aw}^2}\right) \lambda_j$), the distribution of 
	$V_k, k=K_1+1,K_1+2,\ldots,K$, is uniform outside the detection region.
	Let $\{L_{k,m}: k=1,\ldots,K_1, m=1,\ldots,M_n\}$ denote the locations (in $[0,T]$) of pulses sent with power within the detection region, and $\{L_{k,m}: k=K_1+1,\ldots,K, m=1,\ldots,M_n\}$ denote the locations of pulses sent with power outside the detection region.
	Given $K_1$, the distribution of $L_k,m$ for $k=1,2,\ldots,K_1$ and all $m$ is uniform over $[0,T]$. Given $K_2$, the distribution of $L_k$ for $k=K_1+1,K_1+2,\ldots,K$ and all $m$ is also uniform over $[0,T]$, which is independent from $D$.}
	\item{$\bold{V}, \bold{L} \longrightarrow \bold{S}, \bold{L}$: The conditional distribution of $S_{k,m}$, for $k=1,2,\ldots,K, m=1,2,\ldots, M_n$, given $V_k$, is a zero-mean Gaussian random variable with variance $V_k$.} 
\end{itemize}
\begin{figure}[h!]
	\includegraphics[width=2.6in]{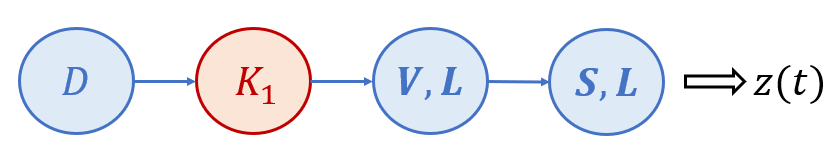}
	\centering
	\caption{Markov chain illustrating the transition from Alice's decision $D$ on transmission, to Willie's observed signal $z(t)$.}
	\label{markov}
\end{figure}

Given the pulse locations and the height of the pulses, the signal $z(t)$ can be  constructed from $p(t)$ and the AWGN of Willie's channel. From the Markov chain shown in Fig.~\ref{markov}, we see that  $z(t)$ conditioned on $K_1$ is independent of $D$. Therefore, $K_1$ is a sufficient statistic for Willie to decide between hypotheses $H_0$ and $H_1$.

In particular, hypotheses $H_0$ and $H_1$ can be characterized as:
\begin{itemize}
	\item{$H_0$: the number of power levels within the detection region follows $Pois\left(\frac{\lambda_j\Delta_{P_a}}{\Delta_{P_j} d_{aw}^r}\right)$;}	
	\item{$H_1$: the number of power levels within the detection region follows $Pois\left(\frac{\lambda_j\Delta_{P_a}}{\Delta_{P_j} d_{aw}^r}\right) + 1$.}
\end{itemize}

\subsection{Covert Limit}
Let $P_0$ and $P_1$ denote the distribution of the number of power levels observed by Willie (within Willie's detection range) given $H_0$ and $H_1$, respectively:
\begin{align}
P_0(k)=\frac{\lambda^k e^{-\lambda}}{k!} , \,\,\, k\geq 0
\label{p0}
\end{align}  
and
\begin{align}
P_1(k)=\frac{\lambda^{k-1} e^{-\lambda}}{(k-1)!} , \,\,\, k\geq 1
\label{p1}
\end{align}
where $\lambda=\frac{\lambda_j \Delta_{P_a}}{\Delta_{P_j} d_{aw}^r}$.
Theorem 13.1.1 in \cite{TSH} shows that for the optimal hypothesis test,
\begin{align}
P_{FA}+P_{MD}=1-\mathcal{V}_T (P_0,P_1) \nonumber
\end{align}
where
\begin{align}
\mathcal{V}_T (P_0,P_1)=\frac{1}{2} \sum_k |P_0(k)-P_1(k)| \nonumber
\end{align}
is the total variation distance between $P_0$ and $P_1$, where the sum is over all $k$ in the support of $P_0 \cup P_1$. Therefore, by the definition of covertness, if
\begin{align}
\mathcal{V}_T (P_0,P_1) \leq \epsilon ,
\end{align} 
Alice achieves covert communications.

Given (\ref{p0}) and (\ref{p1}), we derive:
\begin{align}
&\mathcal{V}_T (P_0,P_1) = \frac{1}{2}\sum_{k=1}^{\infty}|P_0(k)-P_1(k)|+\frac{1}{2} P_0(0) \nonumber \\
&= \frac{1}{2} \sum_{k=1}^{\infty} \frac{\lambda^{k-1} e^{-\lambda}}{(k-1)!} \left|\frac{\lambda}{k}-1 \right| + \frac{1}{2} e^{-\lambda} \nonumber \\
&= \frac{1}{2} \Bigg[ \sum_{k=1}^{\lambda} \frac{\lambda^{k-1}e^{-\lambda}}{(k-1)!} \left(\frac{\lambda}{k}-1\right) + \sum_{k=\lambda+1}^{\infty} \frac{\lambda^{k-1}e^{-\lambda}}{(k-1)!} \left(1-\frac{\lambda}{k}\right) \nonumber \\
&\qquad\qquad\qquad\qquad\qquad\qquad\qquad\qquad\qquad\qquad\quad\,\,\,\,\, + e^{-\lambda} \Bigg] \nonumber \\
&= \frac{1}{2} \Bigg[ \sum_{k=1}^{\lambda} \frac{\lambda^k e^{-\lambda}}{k!} - \sum_{k=1}^{\lambda} \frac{\lambda^{k-1}e^{-\lambda}}{(k-1)!} + \sum_{k=\lambda+1}^{\infty} \frac{\lambda^{k-1}e^{-\lambda}}{(k-1)!} \nonumber \\
&\qquad\qquad\qquad\qquad\qquad\qquad\qquad\quad - \sum_{k=\lambda+1}^{\infty} \frac{\lambda^k e^{-\lambda}}{k!} + e^{-\lambda} \Bigg] \nonumber \\
&= \frac{1}{2} \Bigg[ \sum_{k=1}^{\lambda}\frac{\lambda^k e^{-\lambda}}{k!} + \sum_{k=\lambda}^{\infty}\frac{\lambda^k e^{-\lambda}}{k!}  -  \sum_{k=0}^{\lambda-1} \frac{\lambda^k e^{-\lambda}}{k!} \nonumber \\
&\qquad\qquad\qquad\qquad\qquad\qquad\qquad\quad - \sum_{k=\lambda+1}^{\infty} \frac{\lambda^k e^{-\lambda}}{k!} + e^{-\lambda} \Bigg]  \nonumber \\
&= \frac{1}{2} \left( \sum_{k=1}^{\infty} \frac{\lambda^k e^{-\lambda}}{k!} + \frac{\lambda^\lambda e^{-\lambda}}{\lambda!} - \sum_{k=0}^{\infty} \frac{\lambda^k e^{-\lambda}}{k!} + \frac{\lambda^\lambda e^{-\lambda}}{\lambda!} + e^{-\lambda} \right) \nonumber \\
&= \frac{\lambda^\lambda e^{-\lambda}}{\lambda!} \nonumber
\end{align}
Using Stirling's approach, this can be upper bounded as:
\begin{align}
\frac{\lambda^\lambda e^{-\lambda}}{\lambda!} & \leq \frac{\lambda^\lambda e^{-\lambda}}{\sqrt{2\pi}\lambda^{\lambda+1/2}e^{-\lambda}}  = \frac{1}{\sqrt{2\pi\lambda}} \,. \nonumber
\end{align} 
Thus, if 
\begin{align}
\lambda \geq \frac{1}{2\pi\epsilon^2} , \nonumber
\end{align}
i.e.,
\begin{align}
\lambda_j \geq \frac{\Delta_{P_j} d_{aw}^r}{2\pi\Delta_{P_a}\epsilon^2}\, , 
\label{lj}
\end{align}
covertness is achieved. This implies that one of the two strategies can be employed: 1) Alice chooses a $\Delta_{P_a}$ and the jammer can use an upper bound on $d_{aw}^r$ to choose $\lambda_j$; 2) the jammer chooses a $\lambda_j$ and Alice can use $d_{aw}^r$ to choose $\Delta_{P_a}$.

Since the maximum interference from the jammer at Bob can be upper bounded by a constant, reliability is achieved under the same construction.
Thus, under this construction, Alice can achieve covert and reliable communications when the path-loss between her and Willie is unknown.
Also, since under the above construction, Alice can send $M_n=\lfloor\alpha\lfloor WT \rfloor\rfloor$ pulses with a constant power (which does not decrease with $WT$), $\mathcal{O}(WT)$ bits can be transmitted covertly and reliably from Alice to Bob.

\section{Conclusion}
In this paper, we have studied covert communications in continuous-time systems, where Alice wants to reliably communicate with Bob in the presence of a jammer without being detected by Willie. We established constructions that allow Alice to achieve covert communications in both cases when the path-loss between Alice and Willie is known and unknown. We proved that $\mathcal{O}(WT)$ covert information bits on a channel with approximate bandwidth $W$ can be reliably transmitted from Alice to Bob in $T$ seconds for both cases. In this paper, an infinite number of key bits shared between Alice and Bob is needed. A direction for future work is to consider the use of a finite number of key bits and the values of the scaling constants.

\begin{appendices}
	
	\section{Simulation of Fig.~\ref{ICD}}
	\cite{SC} introduces a co-channel interference cancellation technique with initial signal separation when the signals have different timing offsets. Here we apply similar techniques in covert communication systems where the receiver only wants to detect the existence of the power -- a single bit of information, instead of a signal from its mixture of another signal. In the simulation, we set the number of trials to $1000$. We let Alice and the jammer send $200$ i.i.d zero-mean Gaussian symbols with pulse-shaped waveforms (using square-root raised cosine pulse shaping filter with roll-off factor $0.2$). The two signals have symbol period $T_s=48$ discrete-time samples and time delay difference $T_s/6$. Alice's signal to noise ratio (SNR) is set to be $5$ dB, and the jammer's SNR is set to be $20$ dB. %The two signals are first partially separated by sampling at different timings at the receiver. The jammer's signal is then reconstructed and projected to timing corresponds to Alice's signal. Finally, it is subtracted from the samples sampled at that timing to obtain an approximate of Alice's signal. 
	The jammer's signal is treated as interference and is subtracted using the same techniques in \cite{SC} without iteration.
	Standard power detector is then applied at the output to detect Alice's presence.

	\section{Discussion of the Bandwidth of the Constructions}
	Here we provide a brief discussion on the bandwidth of our 
	construction.  For either of our constructions, each of a random  or a constant number 
		$M$ of pulses with pulse shape $p(t)$ is multiplied by its corresponding 
	symbol and then placed with delay randomly drawn from the interval 
	$[0,T]$.  This results in a waveform:
	\begin{align}
		X(t) = \sum_{k=1}^N a_k p(t - \tau_k) \nonumber
	\end{align}
	where $a_k, n=1,2,\ldots,N$ is the sequence of zero-mean 
	independent symbol values, and $\tau_k, k=1,2,\ldots, M$ is the i.i.d. 
	sequence of pulse delays.   Since the delays are drawn uniformly over 
	only the interval $[0,T]$, the process $X(t)$ is not wide-sense 
	stationary and thus its bandwidth is not strictly defined.  Hence, 
	consider rather the following random process, which is an extension of 
	the construction to the infinite interval:
	\begin{align}
		\tilde{X}(t) = \sum_{i=-\infty}^{\infty} \sum_{k=1}^N a_k^{(i)} p (t - 
		\tau_k^{(i)} - i T) \nonumber
	\end{align}
	where $a_k^{(0)}= a_k$ and $\tau_k^{(0)} = \tau_k$, 
	$k=1,2,\ldots,N$, and the values for the intervals outside of $[0,T]$ 
	are chosen independently but according to the same construction as 
	within $[0,T]$.  The random process $\tilde{X}(t)$ is wide-sense 
	stationary, and, through standard digital communication system analysis 
	arguments, has power spectral density $S_{\tilde{X}}(f) = |P(f)|^2$, 
	where $P(f)$ is the Fourier transform of $p(t)$.  Hence, the bandwidth 
	of $\tilde{X}(t)$ is the same as that of $P(f)$.  Observing that $X(t)$ 
	is a windowed version of $\tilde{X}(t)$ and that $T$ is very large, the 
	signal $X(t)$ is approximately bandlimited to the bandwidth $W$ of $p(t)$.
	
\end{appendices}

\end{document}